\documentclass{article}
\usepackage{ijcai13}
\usepackage{times}
\usepackage{verbatim}
\usepackage{amssymb}
\usepackage{amsthm}
\usepackage{qtree}

\newtheorem{theorem}{Theorem}
\newtheorem{prop}{Proposition}
\theoremstyle{definition}
\newtheorem{definition}{Definition}

\title{Revenue Maximization via Hiding Item Attributes}
\author{Mingyu Guo\\Department of Computer Science\\University of Liverpool, UK\\Mingyu.Guo@liverpool.ac.uk \And Argyrios Deligkas\\Department of Computer Science\\University of Liverpool, UK\\A.Deligkas@liverpool.ac.uk}

\begin{document}

\maketitle

\begin{abstract} We study probabilistic single-item second-price auctions where
the item is characterized by a set of attributes. The auctioneer knows the actual
instantiation of all the attributes, but he may choose to reveal only a subset
of these attributes to the bidders.  Our model is an abstraction of the
following Ad auction scenario. The website (auctioneer) knows the demographic
information of its impressions, and this information is in terms of a list of
attributes (e.g., age, gender, country of location). The website may hide
certain attributes from its advertisers (bidders) in order to create thicker
market, which may lead to higher revenue. We study how to hide attributes in an
optimal way.  We show that it is NP-hard to solve for the optimal attribute
hiding scheme.  We then derive a polynomial-time solvable upper bound on the
optimal revenue.  Finally, we propose two heuristic-based attribute hiding
schemes.  Experiments show that revenue achieved by these schemes is close to
the upper bound.  \end{abstract}

\section{Introduction} One advantage of Internet advertising is that it offers
advertisers the ability to target customers based on various traits such as
demographics. \cite{Even-Dar07:Sponsored} showed that, for sponsored search of
a given keyword, instead of running a single auction for the keyword, we can
split the whole auction into many separate auctions based on
visitors/impressions' {\em contexts} (e.g., demographics). For example, if we
know and only know the visitors' locations, then each location defines a
context. In this example scenario, splitting based on context means separate
auction for each location. Splitting based on context increases the
advertisers' welfare.  The explanation is simple: after splitting, advertisers
can tailor their bids to the context.  As a result, advertisers generally only
win (impressions from) visitors that they aim to target, and the payments are
also lower, since advertisers only face competition from those targeting
similar visitors. On the other hand, splitting may reduce the revenue received
by the auctioneer (publisher, e.g., website) due to the {\em thin market
problem}: there may be few competitors for some contexts. Actually, if
for every context, there is only one advertiser interested in it, then the
total revenue is $0$ under the standard second-price auction.

\cite{Ghosh07:Computing} observed that having a single auction for all contexts
and having separate auction for each context are not the only two options.
There are other ways to split based on context, and it may lead to much higher
revenue.  The idea explored in \cite{Ghosh07:Computing} is to {\em cluster} the
contexts into bundles, and run separate auction for each bundle.  For example,
suppose there are three different contexts: Beijing, Chicago, and London
(assuming the only contextual information is the location and visitors are only
from these three cities).  We can have one auction for the bundle Beijing and
Chicago (and a second auction for London only).  The interpretation (due to
\cite{Emek12:Signaling}) is that if a visitor is from Beijing or Chicago, then
the auctioneer informs the advertisers that the impression is from one of these
two cities, {\em but not exactly which}.  When this happens, both advertisers
targeting Beijing and advertisers targeting Chicago will compete in the
auction.  Their bids depend on how much they value impressions from Beijing and
Chicago, respectively. Their bids also depend on the conditional probability
that the impression is from Beijing (or Chicago) given that the impression is
from one of these two cities.

To put it more formally, \cite{Ghosh07:Computing} studied probabilistic
single-item second-price auctions (again, interpretation due to
\cite{Emek12:Signaling}).  In such an auction, there is only one item for sale
under a second-price auction, but the item has different possible {\em
instantiations}. The auctioneer knows the actual instantiation but the bidders
do not. The auctioneer may choose to hide certain information from the bidders
if this increases the revenue.  The probabilistic single-item second-price
auction model is an abstraction of the following Ad auction scenario. We have a
website that sells one advertisement slot. That is, there is only one item --
the only advertisement slot, but the item takes many possible instantiations,
due to the fact that visitors/impressions have different demographic profiles.
The auctioneer knows every visitor's demographic profile, and he may hide
certain information from the advertisers.  As mentioned above,
\cite{Ghosh07:Computing} considered hiding information by {\em clustering}: the
auctioneer tells the bidders that the actual instantiation is among several
instantiations.  \cite{Emek12:Signaling,Miltersen12:Send} studied the exact
same model and went one step further. These two papers studied hiding
information by {\em signaling}: the auctioneer sends out different signals, and
the bidders infer the probability distribution of the actual instantiation,
based on the signal received. It is easy to see that signaling is more general
than clustering.  Interestingly, for full information settings (settings where
the auctioneer knows the bidders' exact valuations), \cite{Ghosh07:Computing}
showed that it is NP-hard to solve for the optimal clustering scheme (optimal
in terms of revenue). On the other hand,
\cite{Emek12:Signaling,Miltersen12:Send} both independently showed that, under
the same full information assumption, it takes only polynomial time to solve
for the optimal signaling scheme. This is mostly due to the fact that 
instantiations are treated as divisible goods
under signaling schemes, 

In this paper, we continue the study of revenue-maximizing probabilistic
single-item second-price auctions.  We observe that in practice, {\em Ad impressions
are categorized based on multiple attributes}.  Given this, we argue that the
most natural way to hide information is by {\em hiding attributes}. For
example, let there be three attributes, each with two possible values:

\begin{itemize}
\item Age: Teenager, Adult
\item Gender: Male, Female
\item Location: US, Non-US
\end{itemize}

Together there are $2^3$ possible instantiations. Under the clustering scheme
studied in \cite{Ghosh07:Computing}, the website is allowed to hide information
by bundling any subset of instantiations.  However, not all bundles are
natural. For example, consider the bundle $\{$(Teenager, Male, US), (Adult,
Female, Non-US)$\}$. By creating this bundle, the website basically 
may tell the advertisers that a visitor is either a teenage US male or an adult
Non-US female. This does not appear natural. The signaling scheme studied in
\cite{Emek12:Signaling,Miltersen12:Send} is even more general than clustering,
so it may also lead to unnatural bundles.

On the other hand, attribute hiding always leads to natural bundles.  For
example, the website may hide the location attribute.  That is, if the actual
instantiation is (Teenager, Male, US), then the website may inform the advertisers
that the visitor is a teenage male. By hiding the location attribute, we
essentially created a bundle (Teenager, Male, ?), which consists of both
(Teenager, Male, US) and (Teenager, Male, Non-US).

Based on the above example, it is easy to see that attribute hiding is
clustering with a particular structure.  It should be noted that this
relationship between attribute hiding and clustering does not mean previous
results on clustering apply to our model.  For example, one of the two main
results from \cite{Ghosh07:Computing} is a constructed clustering scheme that
guarantees one half of the optimal revenue (and one half of the optimal social
welfare). The construction does not apply to our model since it generally 
leads
to unnatural bundles.

In this paper, we first show that it is NP-hard to solve for the optimal
attribute hiding scheme.\footnote{We mentioned earlier that
\cite{Ghosh07:Computing} proved a similar result. The authors showed that it is NP-hard to solve
for the optimal clustering scheme. It should be noted that our NP-hardness
result is not implied by  this earlier result, which relied
on reduction involving unnatural bundles. Actually, our
requirement on bundles being natural greatly adds to the difficulty of the
reduction, and our proof is based on completely new techniques.} We then derive
a polynomial-time solvable upper bound on the optimal revenue.  Finally, we
propose two heuristic-based attribute hiding schemes.  Experiments show that
revenue achieved by these schemes is close to the upper bound.  

Besides the aforementioned related work in the computer science literature, bundling has also been well-studied in the economics literature.
\cite{Palfrey83:Bundling} observed that for small numbers of bidders, a
revenue-maximizing auctioneer may choose to bundle the items, and this makes
bidders universally worse-off. On the other hand, for large numbers of bidders,
the auctioneer may choose to unbundle the items, and this hurts the high-demand
bidders while benefiting the low-demand bidders.  \cite{Chakraborty99:Bundling}
quantitatively analyzed the bundling behavior of the auctioneer. The result is
that under a Vickrey auction, for each pair of objects, there is a unique
critical number.  If there are fewer bidders than this number, the seller
chooses to bundle the items, and vice versa.  \cite{Avery00:Bundling} studied
more sophisticated bundling policy, including bundling with discounts and
probabilistic bundling (the probability of bundling occurring depends on the
bids).

\section{Model Description}

There is a single item for sale characterized by $k$ attributes (attribute $1$
to $k$).  Attribute $i$ has $C_i$ possible values, ranging from $0$ to $C_i-1$. 
$m$ is the total number of possible instantiations.
$m=\prod_{i}C_i$.  In this paper, when we mention polynomial time or
NP-hardness, we mean in terms of $m$.

An instantiation whose $i$-th attribute equals $a_i$ is written as
\[(a_1,a_2,a_3,\ldots,a_k)\]

The space of all possible instantiations $\Omega$ is 
\[\{0,\ldots,C_1-1\}\times\{0,\ldots,C_2-1\}\times\ldots\times\{0,\ldots,C_k-1\}\]

\begin{definition}
A {\em natural bundle} $b$ is an element from the following set of all natural bundles (denoted by $\mathcal{B}$):
\[\{0,\ldots,C_1-1,?\}\times\{0,\ldots,C_2-1,?\}\times\ldots\times\{0,\ldots,C_k-1,?\}\]
\end{definition}

Natural bundles are bundles of instantiations resulting from hiding attributes.
An attribute of a natural bundle either takes a specific value, or is represented by a
question mark, which means that this attribute is hidden.  For example, let
$k=5$, given the instantiation $(a_1,a_2,a_3,a_4,a_5)$, if we hide attributes
$1$ and $3$, then it results in the natural bundle $(?,a_2,?,a_4,a_5)$. This
bundle has size $C_1C_3$.  As another example, every instantiation itself
corresponds to a natural bundle of size $1$ (no attribute hidden).  An
instantiation $\omega$ belongs to a natural bundle $b$ if and only if for every
attribute, either $\omega$ and $b$ share the same attribute value, or the attribute is
hidden for $b$.
Unlike the total number of arbitrary bundles, which equals $2^m$, 
the total number of natural bundles is polynomial in $m$, as shown below:

\[|\mathcal{B}|=\prod_{1\le i\le k}(C_i+1)\le 
\prod_{1\le i\le k}C_i^2=m^2\]

The probabilities' of different instantiations are based on a {\em publicly
known} distribution $\Delta(\Omega)$.  To simplify the presentation, when
discussing bidders' valuations, we factor in the probabilities.  For example,
if bidder $i$ values $\omega$ at $5$ when $\omega$ is the actual instantiation,
and $\omega$ happens with probability $0.1$, then we say bidder $i$'s valuation
for $\omega$ is $0.5$.  

Let $n$ be the number of bidders.  Let $v_i(\omega)$ be bidder $i$'s (expected)
valuation for instantiation $\omega$. Following
\cite{Ghosh07:Computing,Emek12:Signaling,Miltersen12:Send}\footnote{Besides the
full information setting, \cite{Emek12:Signaling} also discussed the more
general Bayesian setting.}, we assume full information: the auctioneer knows
the bidders' true valuations.  Again, following previous models, we only
consider bidders with additive valuations.  That is, bidder $i$'s valuation for
bundle $b$, denoted by $v_i(b)$, equals $\sum_{\omega\in b}v_i(\omega)$.

Following previous models,
the auction is the Vickrey auction. 
We use $2(b)$ to denote the revenue for selling $b$ as a bundle.  
$2(b)$ is the second highest value in $\{v_i(b)|1\le i\le n\}$.

\begin{definition}
An {\em attribute hiding scheme} is a way to cluster the instantiations into natural
bundles. An attribute hiding scheme is characterized by a
set of bundles $\{b_1,b_2,\ldots,b_t\}$, satisfying
\begin{itemize}
\item All bundles are natural: $b_i\in \mathcal{B}$ for $1\le i\le t$
\item The bundles are disjoint\footnote{If under
an attribute hiding scheme, two different natural bundles share one common instantiation, then for this instantiation,
it is not clear which attributes we should hide.}: for every pair of $b_i$ and $b_j$, there exists an attribute, so that for this attribute, $b_i$ and $b_j$ take different values (neither is $?$).
\end{itemize}
Under the attribute hiding scheme 
$\{b_1,b_2,\ldots,b_t\}$, instantiations covered by $b_i$ will have their attributes hidden
to match $b_i$. Essentially, instantiations in $b_i$ are sold in a bundle.
Instantiations not covered by any $b_i$ are sold without hiding attributes (sold separately as natural bundles of size $1$).
\end{definition}

%For presentation purposes, we assume the instantiations are sold separately by
%default.  If for an attribute hiding scheme $\{b_1,b_2,\ldots,b_t\}$, we have
%$\cup_{1\le i\le t}b_i\subsetneq \Omega$.  Then that means every instantiation
%in $\Omega-\cup_{1\le i\le t}b_i$ is sold as separate natural bundle of size
%$1$ (sold without hiding any attribute).

Under attribute hiding scheme
$\{b_1,b_2,\ldots,b_t\}$, the revenue of the auctioneer equals
\[\sum_{1\le i\le t}2(b_i)+\sum_{\omega\in \Omega-\cup_{1\le i\le t}b_i}2(\omega)\]

We introduce another function $r$. For $b\in \mathcal{B}$, $r(b)$ represents
the extra revenue obtained by selling $b$ as a bundle, rather than selling 
instantiations in $b$ separately.
We have
\[r(b)=2(b)-\sum_{\omega\in b}2(\omega)\]

The revenue of the auctioneer can then be rewritten as
\[\sum_{1\le i\le t}r(b_i)+\sum_{\omega\in \Omega}2(\omega)\]

The second term of the above expression does not depend on the attribute hiding
scheme. Therefore, the problem of designing optimal attribute hiding scheme is
equivalent to the problem of searching for a set of disjoint natural
bundles $\{b_1,b_2,\ldots,b_t\}$, so that $\sum_{1\le i\le t}r(b_i)$ is
maximized.

\section{Hardness Result}

Previously, \cite{Ghosh07:Computing} showed that it is NP-hard to solve for the
optimal clustering scheme. The proof was by reduction from {\em 3-partition}:
given $3z$ integers, determine whether it is possible to partition them into
$z$ groups with equal sums.  
%The main step of the proof is to construct one bidder who has different
%valuations for $3z$ instantiations, and the determination problem is to see
%whether it is possible to cluster these $3z$ instantiations into $z$ bundles,
%so that the bidder values these bundles equally.  \footnote{ venue.} This
%previous technique does not apply to our model -- we require that the bundles
%be natural.
In this section, we prove a similar result. We show that it is also NP-hard to
solve for the optimal attribute hiding scheme. Our proof is by reduction from
{\em monotone one-in-three 3SAT}~\cite{Schaefer78:The}.  Monotone one-in-three
3SAT is a variant of 3SAT. Monotone means that the literals are just variables,
never negations. One-in-three means that the determination problem is to see
whether there is an assignment so that for each clause, exactly one literal is
true. We emphasize again that our result is not implied by the hardness result from \cite{Ghosh07:Computing}.
%since previous reduction involves unnatural bundles. Actually, our requirement
%on bundles being natural greatly adds to the difficulty of the reduction.

\begin{theorem}
It is NP-hard to solve for the optimal attribute hiding scheme.
\end{theorem}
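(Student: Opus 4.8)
The plan is to prove NP-hardness of the decision version of the problem---given a threshold $T$, does there exist an attribute hiding scheme with revenue at least $T$?---by a polynomial reduction from monotone one-in-three 3SAT. By the reformulation established above, it suffices to construct, from a 3SAT instance, a collection of attributes, attribute values, and additive valuations so that the maximum of $\sum_i r(b_i)$ over families of pairwise disjoint natural bundles encodes the existence of a valid one-in-three assignment. The feature I would exploit is that a natural bundle is exactly an axis-aligned box in the grid $\Omega$ (full range along hidden attributes, fixed coordinates elsewhere), so disjointness is a purely geometric constraint on boxes; and that I can make $r(b)$ positive on any chosen target box while keeping it nonpositive elsewhere, using the standard observation that two bidders with complementary single-instantiation values contribute nothing when instantiations are sold separately (each $2(\omega)=0$) yet produce a positive second price $2(b)$ once they are forced to compete inside a bundle.

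First I would build a \emph{variable gadget} for each $x_j$ consisting of two complementary target boxes $b_j^{T}$ and $b_j^{F}$ that overlap (so disjointness permits choosing at most one) and each carrying the same base reward $\rho$; selecting one of them represents setting $x_j$ true or false. Next I would build a \emph{clause gadget} for each clause $C=(x_a\vee x_b\vee x_c)$: a target box (or a small family of boxes) carrying a clause bonus $\sigma$ whose disjointness with the already-chosen variable boxes can be respected if and only if exactly one of $b_a^{T},b_b^{T},b_c^{T}$ is selected. The \emph{exactly one} requirement would be encoded through the blocking geometry, arranged so that selecting none of the three true-boxes leaves the bonus box obstructed, while selecting two or more of them forces an overlap with the bonus box---so the bonus is collectible precisely in the one-in-three case, and a variable shared across clauses is read consistently because its single pair $b_j^{T},b_j^{F}$ feeds every clause containing it.

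With the gadgets in place I would set the threshold to $T=\sum_j\rho+\sigma\cdot(\text{number of clauses})$, the total of all base rewards and all clause bonuses. For soundness I would argue that any disjoint family attaining $T$ must collect every base reward (hence commit exactly one truth value per variable, since $\rho$ is available only through $b_j^{T}$ or $b_j^{F}$ and these overlap) and every clause bonus (hence satisfy exactly-one in each clause), yielding a valid assignment; for completeness, a valid one-in-three assignment directly selects a disjoint family realizing $T$. Since the number of attributes, attribute values, and bidders introduced is polynomial in the number of variables and clauses---hence in $m$---the reduction runs in polynomial time, which establishes the theorem.

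I expect the main obstacle to be encoding the two-sided \emph{exactly one} constraint together with cross-clause variable consistency purely through disjointness of axis-aligned boxes, with no recourse to unnatural bundles; this is exactly the restriction that makes the half-optimal construction of \cite{Ghosh07:Computing} inapplicable here. Realizing the desired sign pattern of $r$ is routine given the complementary-bidder trick, but laying out the coordinate grid so that the \emph{only} disjoint box families reaching $T$ correspond to valid assignments---and verifying that no unintended combination of boxes can cheat its way to $T$---will require a carefully layered coordinate system and a delicate soundness/completeness accounting over all admissible box families.
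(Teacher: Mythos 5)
Your proposal identifies the right reduction source (monotone one-in-three 3SAT, the same as the paper's) and the right framework---encode the instance in the intersection pattern of natural bundles, realize the reward function $r$ with gadget bidders, and read off satisfiability from whether a revenue threshold is attained---but it leaves unconstructed exactly the part that constitutes the proof, and the one concrete gadget you do sketch provably cannot work as described. Your clause gadget asks for a bonus box $B_C$ that is selectable if and only if exactly one of $b_a^{T},b_b^{T},b_c^{T}$ is selected. But disjointness is a pairwise, fixed geometric relation: $B_C$ either intersects a given variable box or it does not, independently of what else is chosen. For the bonus to be collectible in the configuration where $x_a$ is the unique true variable, $B_C$ must be disjoint from $b_a^{T}$, $b_b^{F}$, $b_c^{F}$; applying the same requirement to the other two ``exactly one'' configurations forces $B_C$ to be disjoint from all six variable boxes. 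Then $B_C$ is also selectable when all three variables are true (and when all three are false), so the gadget imposes no constraint at all and soundness fails on any unsatisfiable instance. A single box cannot express a two-sided ``exactly one'' condition; you would need a family of boxes, and you have not shown that the required intersection pattern is simultaneously realizable by axis-aligned boxes across all clauses sharing variables---this realizability is the actual difficulty, which you explicitly defer (``will require a carefully layered coordinate system'').

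It is worth seeing how the paper resolves both of your open points, because its design is structurally different from yours. First, it does not use a single ``true'' box per variable feeding every clause containing it; instead it uses one bundle per \emph{occurrence} of a variable (Families~1--3, reward $1$ each), and cross-clause consistency is enforced not geometrically but through the reward of the per-variable ``false'' bundle $b^4(e)$, set to $\#e(1-\epsilon)>\#e-1$, so that selecting a strict subset of a variable's occurrence bundles is never optimal. Second, the clause gadget (Families~5--7: three pairwise-intersecting bundles of reward $3$ each) does not make its bonus conditional on exactly-one-true; exactly one of the three is always selected, contributing a constant $3D$, and its only role is to block more than one occurrence bundle per clause. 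The ``exactly one true literal per clause'' condition then emerges from the accounting: the total reward is $\epsilon\sum_{e\in T}\#e+(2-\epsilon)3D$ with $\sum_{e\in T}\#e\le D$, and the threshold $6D-2D\epsilon$ is met iff equality holds, i.e., iff the selection corresponds to a one-in-three assignment. Finally, realizing the sign pattern of $r$ is not routine either: with many overlapping target boxes one must check that no unintended bundle acquires positive $r$, and the paper commits to a specific construction (two background bidders valuing every instantiation at $L$, plus, for each helpful bundle $b$, two bidders valuing $b$'s instantiations at $L$ and the corners $b|_?^0$ and $b|_?^1$ at $r(b)+L$). As it stands, your proposal is a plan whose central gadgets are either missing or, as sketched, inadequate; it does not yet constitute a proof.
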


\begin{proof} Let us consider the following monotone one-in-three 3SAT instance with $D$
clauses:

\[(x_{f(1)}\vee x_{f(2)}\vee x_{f(3)})\land
(x_{f(4)}\vee x_{f(5)}\vee x_{f(6)})\land\ldots\]
\[\ldots\land(x_{f(3D-2)}\vee x_{f(3D-1)}\vee x_{f(3D)})\]

There are $3D$ literals, and they are from a list of $E$ variables ($x_1$ to
$x_E$, $f$'s range is between $1$ and $E$).
According to \cite{Schaefer78:The}, it is NP-complete to determine whether
there exists an assignment of the $x_i$, so that the 3SAT instance is true, and
for each clause, there is exactly one true literal.

We will construct a probabilistic single-item auction scenario with $m$
possible instantiations and $n$ bidders. Both $m$ and $n$ are polynomial in
$E$.  We will show that for the constructed scenario, if we are able to solve
for the optimal attribute hiding scheme in polynomial time (in $m$), then we
are able to determine the above 3SAT instance in polynomial time (in $E$). This
implies that it is NP-hard to solve for the optimal attribute hiding
scheme.

Our construction is as follows. 
Let the number of attributes $k$ be $\lceil\log_2(D)\rceil+\lceil\log_2(E)\rceil+11$.
All attributes are binary.
The total number of instantiations $m$ is polynomial in $E$ as shown below.
\[m=2^{\lceil\log_2(D)\rceil+\lceil\log_2(E)\rceil+11}
\le 2^{\log_2(D)+\log_2(E)+13}\]
\[=8192DE\le 8192E^4\]

Our proof relies on the following seven families of natural bundles (Family~\ref{eq:f1} to \ref{eq:f7}):
\begin{equation}
\label{eq:f1}
(\underline{e},\underline{d},0,?,?,0,1,0,1,0,1,0,1)
\end{equation}
\begin{equation}
\label{eq:f2}
(\underline{e},\underline{d},?,0,?,0,1,0,1,0,1,0,1)
\end{equation}
\begin{equation}
\label{eq:f3}
(\underline{e},\underline{d},?,?,0,0,1,0,1,0,1,0,1)
\end{equation}
\begin{equation}
\label{eq:f4}
(\underline{e},\underline{?},0,0,0,?,?,0,1,0,1,0,1)
\end{equation}
\begin{equation}
\label{eq:f5}
(\underline{?},\underline{d},1,?,?,0,1,?,?,0,1,0,1)
\end{equation}
\begin{equation}
\label{eq:f6}
(\underline{?},\underline{d},?,1,?,0,1,0,1,?,?,0,1)
\end{equation}
\begin{equation}
\label{eq:f7}
(\underline{?},\underline{d},?,?,1,0,1,0,1,0,1,?,?)
\end{equation}

In the above, $\underline{e}$ is the binary representation of integer $e$
($1\le e\le E$). The representation's width is $\lceil\log_2(E)\rceil$. 
%For
%example, suppose $E=10$ (there are $10$ variables), then
%$\lceil\log_2(E)\rceil=4$.  $\underline{1}$ is the binary representation of
%$1$, with width $4$.  That is, $\underline{1}$ is $0,0,0,1$.  $\underline{5}$
%is the binary representation of $5$, with width $4$.  That is, $\underline{5}$
%is $0,1,0,1$.  
Similarly, $\underline{d}$ is
the binary representation of integer $d$ ($1\le d\le D$). The representation's
width is $\lceil\log_2(D)\rceil$.  Finally, $\underline{?}$ is $?$ repeated
$\lceil\log_2(E)\rceil$ times (Family~\ref{eq:f5}, \ref{eq:f6}, and \ref{eq:f7}) or 
$\lceil\log_2(D)\rceil$ times (Family~\ref{eq:f4}).

We recall that the problem of designing optimal attribute hiding scheme is
equivalent to the search of disjoint natural bundles
$\{b_1,b_2,\ldots,b_t\}$, so that $\sum_{1\le i\le t}r(b_i)$ is maximized.
Given a natural bundle $b$, $r(b)$ depends on the bidders' valuations.  We will
construct a set of bidders, so that for any natural bundle $b$, $r(b)=0$ by
default. The exceptions are:

\begin{itemize}

\item For $i=1,2,3$, we use $b^i(e,d)$ to represent the natural bundle
characterized by $e$ and $d$ in Family $i$.  $r(b^i(e,d))=1$ if and only if, in
the 3SAT instance, variable $e$ appears in the $i$-th position of clause $d$.

%\item Let $b^2(e,d)$ be the natural bundle characterized by $e$ and
%$d$ in Family~\ref{eq:f2}.  $r(b^2(e,d))=1$ if and only if, in the 3SAT
%instance, variable $e$ appears in the {\em second} position of clause $d$.
%
%\item Let $b^3(e,d)$ be the natural bundle characterized by $e$ and
%$d$ in Family~\ref{eq:f3}.  $r(b^3(e,d))=1$ if and only if, in the 3SAT
%instance, variable $e$ appears in the {\em third} position of clause $d$.

\item We use $b^4(e)$ to represent the natural bundle characterized by $e$ in
Family~\ref{eq:f4}.  Let $\#e$ be the number of times variable $e$ appears in
the 3SAT instance. It is without loss of generality to assume $\#e\le D$ (no
literal appears twice in a clause).  
Let $r(b^4(e))=\#e(1-\epsilon)$.  Here, $\epsilon$ is a constant that is less than
$\frac{1}{D}$.  The idea is to make sure that $\#e(1-\epsilon)>\#e-1$.

\item We use $b^5(d)$ to represent the natural bundle characterized by $d$ in
Family~\ref{eq:f5}.  $r(b^5(d))=3$.

\item We use $b^6(d)$ to represent the natural bundle characterized by $d$ in
Family~\ref{eq:f6}.  $r(b^6(d))=3$.

\item We use $b^7(d)$ to represent the natural bundle characterized by $d$ in
Family~\ref{eq:f7}.  $r(b^7(d))=3$.
\end{itemize}

For now, we simply assume that it is possible to construct a polynomial number of
bidders, so that the values of $r(b)$ for different $b$ are indeed as described
above. We will provide the specific construction toward the end.
%We defer the proof of this assumption to the end.
%First of all, it is easy to see that, even though bundles from different
%families may intersect, no bundle appears in more than one family.  Therefore,
%at the very least, in the above construction, for any $b$, we did not assign
%multiple values to $r(b)$.  We still need to demonstrate that it is possible
%to construct a set of bidders, so that the above is feasible.  We will defer
%this to the end of the proof. For now, we assume the above construction is
%possible.

Let $O$ be an optimal attribute hiding scheme corresponding to the above construction.  If
$r(b)=0$, then it is without loss of generality to assume $b\notin O$.
Therefore, we can ignore bundles not in the above seven families.  Some bundles
from Family~\ref{eq:f1} to \ref{eq:f3} can also be ignored for the same reason.
For presentation purposes, we call the remaining bundles {\em helpful} bundles.
A bundle $b$ is helpful if and only if $r(b)>0$.

Let us consider a fixed variable $e$ ($1\le e\le E$).  $e$ appears $\#e$ times
in the 3SAT instance, so there are exactly $\#e$ pairs of $d$ ($1\le d\le D$)
and $i$ ($1\le i\le 3$), so that $b^i(e,d)$ is helpful.  We use
$b_{e,1},b_{e,2},\ldots,b_{e,\#e}$ to denote these $\#e$ helpful bundles. They
are the only helpful bundles that intersect $b^4(e)$.  If some of these 
bundles are not in $O$, then none of them is in $O$.  The reason is that
$r(b^4(e))=\#e(1-\epsilon)>\#e-1$, so it is better off to add $b^4(e)$ into $O$ (and push out
$b_{e,1}$ to $b_{e,\#e}$ if they are in $O$).
In summary, for $e$ from $1$ to $E$, we must have one of the following two:
\begin{itemize}
\item $b_{e,1},b_{e,2},\ldots,b_{e,\#e}$ are all in $O$. $b^4(e)$ is not in $O$.
\item None of $b_{e,1},b_{e,2},\ldots,b_{e,\#e}$ is in $O$. $b^4(e)$ is in $O$.
\end{itemize}

Let $T$ be the set of $e$ values where $b_{e,1},b_{e,2},\ldots,b_{e,\#e}$ are
all in $O$.  Let $F$ be the set of $e$ values where none of
$b_{e,1},b_{e,2},\ldots,b_{e,\#e}$ is in $O$.  We use $O_{1234}$ to denote the
set of helpful bundles in $O$ that belong to Family~\ref{eq:f1} to \ref{eq:f4}.
We have
\[\sum_{b\in O_{1234}}r(b)=\sum_{e\in T}\#e + \sum_{e\in F}\#e(1-\epsilon)\]
\[=\epsilon\sum_{e\in T}\#e + \sum_{e\in T}\#e(1-\epsilon)+\sum_{e\in F}\#e(1-\epsilon)\]
\[=\epsilon\sum_{e\in T}\#e + (1-\epsilon)3D\]

Let us then consider a fixed variable $d$ ($1\le d\le D$), $b^5(d)$, $b^6(d)$, and $b^7(d)$ pair-wise
intersect. Therefore, in $O$, at most one of them can appear.  Actually, exact
one of them appears.  If none of them appears in $O$, then we can add $b^5(d)$
into $O$, which results in higher revenue.  Let $e_2$ and $e_3$ be the second
and third variables in clause $d$ of the 3SAT instance.  The only helpful
bundles $b^5(d)$ intersects with are $b^2(e_2,d)$ and $b^3(e_3,d)$. By removing
these two from $O$ (if they are in $O$ to start with) and adding $b^5(d)$ into
$O$, the revenue increases.  Therefore, for any $d$ from $1$ to $D$, $O$
contains exactly one of $\{b^5(d),b^6(d),b^7(d)\}$.  
We use $O_{567}$ to denote the
set of helpful bundles in $O$ that belong to Family~\ref{eq:f5} to \ref{eq:f7}.
We have
\[\sum_{b\in O_{567}}r(b)=3D\]
Hence,
\[\sum_{b\in O}r(b)=
\sum_{b\in O_{1234}}r(b)+
\sum_{b\in O_{567}}r(b)=
\epsilon\sum_{e\in T}\#e + (2-\epsilon)3D\]

Let $d$ be a specific value between $1$ and $D$. If $b^5(d)$ belongs to
$O$, then among helpful bundles characterized by $d$ from Family~\ref{eq:f1} to
\ref{eq:f3}, the only helpful bundle that can coexist with $b^5(d)$ is
$b^1(e_1,d)$, where $e_1$ is the first variable in clause $d$ of the 3SAT
instance. In general, no matter which among $\{b^5(d),b^6(d),b^7(d)\}$
appears in $O$, 
among helpful bundles characterized by $d$ from Family~\ref{eq:f1} to
\ref{eq:f3}, there is at most one that can be in $O$.
Therefore, the total number of helpful bundles from Family~\ref{eq:f1} to \ref{eq:f3}
in $O$ is at most $D$.
we have
\[\sum_{e\in T}\#e\le D\]
\[\sum_{b\in O}r(b)=\epsilon\sum_{e\in T}\#e + (2-\epsilon)3D
\le \epsilon D + (2-\epsilon)3D=6D-2D\epsilon
\]

If we are able to solve for the optimal attribute hiding scheme in polynomial
time, then we are also able to determine in polynomial time whether $\sum_{b\in
O}r(b)$ is equal to the upper bound $6D-2D\epsilon$.  If they are equal, then
we have a satisfactory assignment of the 3SAT instance.  For variable $e$,
$b_{e,1}$ to $b_{e,\#e}$ determine whether $e$ is true or not.  If they are all
in $O$, then $e$ is set to be true. Otherwise (if none of them is in $O$), $e$
is set to be false.  When the upper bound is reached, $\sum_{e\in T}\#e=D$,
which implies that under the above assignment, there are exactly $D$ true
literals.  Next, we show that two true literals cannot appear in the same
clause. That is, there is exactly one true literal for each clause under the
assignment, and all clauses are satisfied (there are $D$ true clauses).  Given
$d$, let the variables in clause $d$ be $e_1,e_2,e_3$.  $b^1(e_1,d)$,
$b^2(e_2,d)$, and $b^3(e_3,d)$ are all helpful bundles.  We proved that among
helpful bundles characterized by $d$ from Family~\ref{eq:f1} to \ref{eq:f3},
there is at most one that can be in $O$.  Therefore, only one of
$b^1(e_1,d),b^2(e_2,d),b^3(e_3,d)$ can be in $O$. That is, only one of
$e_1,e_2,e_3$ is set to be true.  

The other direction can be shown similarly. If there is a satisfactory
assignment of the 3SAT instance, then $\sum_{b\in O}r(b)$ should match the
upper bound $6D-2D\epsilon$.  

In conclusion, for the constructed auction setting, it is NP-hard to determine
whether the optimal revenue $\sum_{b\in O}r(b)+\sum_{\omega\in \Omega}2(\omega)$
reaches $6D-2D\epsilon+\sum_{\omega\in\Omega}2(\omega)$.  

Finally, we still need to show that it is possible to construct a polynomial number of bidders, so that the values of $r(b)$ are exactly as described above. Due to space constraint, we present the construction and omit the proof.
\begin{itemize}
\item We construct two bidders who both value every instantiation equally, and the valuation for every instantiation is $L$ ($L>D$). 
\item For every helpful bundle $b$, we construct two new bidders. By default, both bidders value all instantiations in $b$ at $L$ and value all instantiations outside of $b$ at $0$. The exceptions are that one bidder values instantiation $b|_?^0$ at $r(b)+L$ and the other bidder values instantiation $b|_?^1$ at $r(b)+L$. Here, $b|_?^y$ is the instantiation resulting from replacing all $?$ in $b$ by $y$. 
\end{itemize}
%
%The total number of all possible natural bundles is $3^k\le 4^k=(2^k)^2=m^2$.
%Therefore, the total number of constructed bidders is polynomial in $m$ (hence
%polynomial in $E$). Next, we show that the constructed values of $r(b)$ for all
%$b$ match the reality if the bidders are as the above.  
%
%We first prove that for any instantiation $\omega$, there do not exist two bidders who
%value it above $L$.
%\begin{itemize}
%\item If the last eight digits of $\omega$ are $0,1,0,1,0,1,0,1$,
%then the new bidders constructed for bundles in Family~\ref{eq:f4} to \ref{eq:f7}
%will not value it above $L$. If the 
%\item d 
%\end{itemize}
%
%Let $b$ be a subset of multiple bundles from the seven families.
%We show that $r(b)=0$.
%Given $b$, if $b$ is not a subset of any bundles inside the seven families,
%then for any $b'$, $0$ bidder for bundle $b'$ and $1$ bidder for bundle $b'$
%both value some of the instantiations in $b$ at $0$s. $0$ bidder's valuation
%for $b$ is then at most $(s(b)-1)L+r(b')\le (s(b)-1)L+D\le s(b)L$. Here, $s(b)$
%is the size of bundle $b$. Therefore, $0$ bidder and $1$ bidder for bundle $b'$
%are not the highest bidders for $b$. The mask bidders are the highest two
%bidders.  Therefore, $r(b)=0$.
\end{proof}

\section{Tree-Structured Attribute Hiding Schemes}

In this section, we study a special family of attribute hiding schemes, which
we call the {\em tree-structured} schemes.  
%It is computationally easy to solve
%for the optimal tree-structured attribute hiding scheme.

Let $b$ be a {\em non-unit} natural bundle (bundle of size greater than $1$).  For $b$, at
least one attribute is hidden. 
%($b$'s representation contains at least one
%question mark).  
Let $x$ be one of the hidden attributes of $b$.  We can split
$b$ into $C_x$ disjoint natural bundles by revealing attribute $x$.  The
resulting bundles are $b|_x^0, b|_x^1, \ldots, b|_x^{C_x-1}$.  $b|_x^i$ represents
the natural bundle obtained by replacing the $x$-th attribute of $b$ by $i$.
%Let $b_0, b_1, \ldots, b_{C_x-1}$ be the resulting bundles.  $b_i$'s
%representation is exactly like $b$'s, except that for the $x$-th attribute,
%instead of being a question mark, it takes a specific value $i$.  
If $b$ belongs to an attribute hiding scheme $O$, then
after splitting $b$, the new scheme becomes
\[(O-\{b\})\cup\{b|_x^0,b|_x^1,\ldots,b|_x^{C_x-1}\}\]
It is easy to see that the new scheme is still feasible
(the bundles remain disjoint).

Tree-structured attribute hiding schemes are results of {\em recursive
splitting} (revealing attribute) starting from $\{(?,?,\ldots,?)\}$.  At every
step, we either terminate and keep the current scheme, or pick a non-unit
bundle from the current scheme, and split (reveal) one of its attributes. 

\begin{definition}
An attribute hiding scheme $O$ is tree-structured if and only if
it satisfies one of the following:
\begin{itemize}
\item $O=\{(?,?,\ldots,?)\}$: the scheme is simply hiding all attributes and selling all instantiations in a single bundle.
\item There exists a tree-structured attribute hiding scheme $O'$. 
There exists a bundle $b\in O'$ whose $x$-th attribute is hidden.
After splitting $b$ by revealing attribute $x$, the resulting scheme is equivalent to $O$.\footnote{Two schemes are equivalent if they share the same set of non-unit bundles.}
\end{itemize}
\end{definition}

Let us consider an example with three binary attributes.
$\{(?,?,?)\}$ is, by definition, a tree-structured attribute hiding scheme.
Starting from $\{(?,?,?)\}$, if we pick $(?,?,?)$ and reveal its second attribute, then we get
\begin{center}
		\Tree [.$(?,?,?)$ $(?,0,?)$ $(?,1,?)$ ]
\end{center}
The leaves $\{(?,0,?),(?,1,?)\}$ characterize a new
tree-structured attribute hiding scheme.
If we further split the first bundle $(?,0,?)$ based on its third attribute, then we get
\begin{center}
		\Tree [.$(?,?,?)$ [.$(?,0,?)$ $(?,0,0)$ $(?,0,1)$ ] $(?,1,?)$ ]
\end{center}
Again, the leaves $\{(?,0,0),(?,0,1),(?,1,?)\}$ characterize a new tree-structured attribute hiding scheme.

%Simple observation leads to the following
\begin{prop}
If there are at most two attributes, then all attribute hiding schemes are tree-structured.\footnote{This proposition implies that if there are at most two attributes ($m$ can still be large), then 
we can solve for the optimal attribute hiding scheme in polynomial time, because it must be tree-structured.}
\end{prop}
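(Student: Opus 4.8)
The plan is to reduce tree-structuredness to a statement purely about the \emph{non-unit} bundles of a scheme. Since the definition declares two schemes equivalent when they share the same set of non-unit bundles, and tree-structuredness is stated up to this equivalence, unit bundles and uncovered instantiations are irrelevant: it suffices to show that the non-unit part of any scheme can be produced by recursive splitting from $\{(?,?,\ldots,?)\}$. I may assume each attribute has at least two values, since an attribute with a single value carries no information. The case $k=1$ is immediate: the only non-unit bundle is $(?)$, so a scheme is either the base case $\{(?)\}$ or has no non-unit bundle, and the latter is realized by fully splitting $(?)$ into singletons. The content is therefore the case $k=2$.

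First I would enumerate the non-unit bundles available when $k=2$: the all-hiding bundle $(?,?)$, the ``row'' bundles $(a,?)$ for $a\in\{0,\ldots,C_1-1\}$, and the ``column'' bundles $(?,c)$ for $c\in\{0,\ldots,C_2-1\}$. The key structural step is to determine which of these can coexist in a disjoint scheme. Because $(?,?)$ contains every instantiation, it meets every other non-unit bundle, so if it appears it is the entire scheme, namely the base case. The crucial observation is that a row $(a,?)$ and a column $(?,c)$ always intersect, since both contain the instantiation $(a,c)$; hence rows and columns can never appear together. Consequently the set of non-unit bundles of any scheme is exactly one of four shapes: $\{(?,?)\}$; a set of distinct rows; a set of distinct columns; or empty.

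It then remains to realize each shape by recursive splitting. The empty shape is obtained by splitting attribute $1$ at the root and then splitting every resulting row on attribute $2$, leaving only unit bundles. For a target set of rows $\{(a_1,?),\ldots,(a_s,?)\}$, I would split attribute $1$ once to obtain all rows $(0,?),\ldots,(C_1-1,?)$, then split on attribute $2$ exactly those rows whose first coordinate lies outside $\{a_1,\ldots,a_s\}$; the surviving non-unit bundles are precisely the targeted rows. The column shape is symmetric, splitting attribute $2$ first, and the shape $\{(?,?)\}$ is the base case itself. In every case the constructed scheme has the same non-unit bundles as the original, so the original is tree-structured.

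The main obstacle, and essentially the only nontrivial content, is the row/column non-coexistence claim together with the resulting four-way classification; the splitting constructions are then routine bookkeeping. I expect this to be exactly where the bound of two attributes is used. A non-base tree-structured scheme must have one attribute that every non-unit bundle reveals, namely the attribute chosen at the first split, and the row/column dichotomy guarantees that such an attribute exists. With three or more attributes this guarantee fails: for instance the three pairwise-disjoint bundles $(0,0,?)$, $(1,?,0)$, and $(?,1,1)$ each hide a different attribute, so no attribute is revealed by all of them, and the scheme is not tree-structured, which is why the statement is restricted to at most two attributes.
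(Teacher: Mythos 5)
Your proposal is correct and follows essentially the same route as the paper's own proof: the key observation in both is that a bundle $(a,?)$ and a bundle $(?,c)$ always intersect (at $(a,c)$), so the non-unit bundles of any scheme are all rows or all columns, after which one realizes the scheme by revealing the corresponding attribute at the root and then fully splitting the rows (or columns) not present in the scheme. Your treatment is slightly more explicit about the reduction to non-unit bundles and the four-way classification, where the paper compresses this via a without-loss-of-generality symmetry argument, but the substance is identical.
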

%Proof omitted due to space constraint.
\begin{comment}
\begin{proof} If there is only one attribute, then the only feasible attribute hiding schemes 
are $\{(?)\}$ (selling all instantiations in a single bundle) and $\emptyset$ (selling every instantiation separately). Both are tree-structured.

We then discuss cases with two attributes.  Let $O$ be an attribute hiding
scheme.  If $(?,?)\in O$, then $O=\{(?,?)\}$, which is tree-structured.  If $O$ is
empty, or every bundle in $O$ has size $1$, then $O$ is the scheme that sells
every instantiation separately, which is also tree-structured.  

The remaining cases are that $O$ contains at least one bundle of size greater
than $1$, and $O$ does not contain $(?,?)$.  Without loss of generality, let
$(x,?)$ be a bundle in $O$. Here, $x$ is a value from $0$ to $C_1-1$.
Due to $(x,?)$'s presence, no bundles of form $(?,y)$ belongs to $O$, because
$(x,?)$ and $(?,y)$ intersect.  We can arrive at a scheme equivalent to $O$ as follows. We start from $(?,?)$
and reveal the first attribute. Then, for $z$ from $0$ to $C_1-1$, if $(z,?)$ does not belong to $O$, 
we further reveal $(z,?)$'s second attribute.  In conclusion, $O$ must be
tree-structured.  \end{proof}
\end{comment}

\begin{prop}
If there are at least three attributes, then there exist attribute hiding schemes that are not tree-structured.
\end{prop}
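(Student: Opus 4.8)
The plan is to exhibit a single explicit scheme and prove it can never arise from recursive splitting, by isolating an invariant that every splitting step preserves. The invariant I would use is the following lemma: \emph{if $O$ is tree-structured and $O\neq\{(?,\ldots,?)\}$, then there is an attribute $x_0$ that is revealed (not $?$) in every non-unit bundle of $O$.} The intuition is that the recursive construction starts from the single bundle $(?,\ldots,?)$, so its very first splitting step is forced to act on the root and reveals some attribute $x_0$; from that point on, splitting only reveals additional attributes and never re-hides one, so $x_0$ remains revealed in every bundle ever produced.

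I would prove this lemma by induction on the recursive construction. In the base-type step, where $O$ is obtained by splitting the root $(?,\ldots,?)$ on attribute $x$, every resulting bundle has $x$ revealed, so $x_0=x$ works. In the inductive step, $O=(O'-\{b\})\cup\{b|_x^0,\ldots,b|_x^{C_x-1}\}$ for a tree-structured $O'$ with an attribute $x_0$ revealed in all its non-unit bundles; the split bundle $b$ is non-unit, so $x_0$ is revealed in $b$, which forces $x\neq x_0$ (since $x$ is hidden in $b$ while $x_0$ is not). Hence each $b|_x^v$ still has $x_0$ revealed, and the surviving bundles of $O'-\{b\}$ keep $x_0$ revealed by the inductive hypothesis, so $x_0$ is revealed in all non-unit bundles of $O$. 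The only subtlety is the ``equivalent schemes'' clause in the definition, but since the lemma concerns non-unit bundles only and equivalent schemes share the same non-unit bundles, the statement is insensitive to this clause.

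With the lemma in hand, the construction is a ``pinwheel'' of three pairwise-disjoint non-unit bundles on $k\ge 3$ attributes (assume, as is standard, that each attribute has at least two values $0$ and $1$; degenerate single-value attributes may be discarded). Take
\[b_1=(0,0,?,?,\ldots,?),\quad b_2=(?,1,0,0,\ldots,0),\quad b_3=(1,?,1,0,\ldots,0).\]
These are pairwise disjoint: $b_1,b_2$ differ on attribute $2$, $b_1,b_3$ differ on attribute $1$, and $b_2,b_3$ differ on attribute $3$, so $\{b_1,b_2,b_3\}$ (with all remaining instantiations sold separately) is a valid attribute hiding scheme. Crucially, attribute $1$ is hidden in $b_2$, attribute $2$ is hidden in $b_3$, and attribute $3$ together with every attribute $4,\ldots,k$ is hidden in $b_1$. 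Thus \emph{no} attribute is revealed in all three non-unit bundles. By the lemma this scheme cannot be tree-structured, which proves the proposition.

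The main obstacle is the invariant lemma, and within it the inductive step: one must verify carefully that the attribute $x$ being revealed in a split necessarily differs from the distinguished attribute $x_0$ (because $x$ is hidden in the bundle being split whereas $x_0$ is revealed there), so that revealing $x$ leaves $x_0$ revealed. Once that is pinned down the rest is routine, and the construction itself requires only the three disjointness checks and the observation that each attribute is hidden somewhere.
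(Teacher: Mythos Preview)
Your argument is correct and rests on the same invariant the paper uses: in any tree-structured scheme other than $\{(?,\ldots,?)\}$, the attribute revealed at the very first split remains revealed in every bundle thereafter. The paper states this in a single sentence (``starting from $(?,?,\ldots,?)$, if we ever reveal an attribute $x$, then $b_x$ cannot be in the final scheme''), whereas you isolate it as a lemma and prove it carefully by induction along the recursive construction; your handling of the ``equivalent schemes'' clause is the right observation.

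The constructions differ. The paper uses $k$ bundles, each hiding exactly one attribute, arranged in a cyclic pattern so that bundle $b_i$ hides attribute $i$; hence for every attribute $x$ there is a non-unit bundle with $x$ hidden, contradicting the invariant. You instead use only three bundles regardless of $k$, with $b_1$ absorbing all attributes $3,\ldots,k$ as hidden and $b_2,b_3$ covering attributes $1$ and $2$. Your construction is more economical (three bundles suffice for any $k\ge 3$), while the paper's is more symmetric and makes the ``every attribute is hidden somewhere'' property immediate by design. Both reach the same contradiction with the same invariant, so the underlying idea is identical.
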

\begin{proof}
We construct the following natural bundles.
For $i$ from $1$ to $k$, let $b_i$'s $i$-th attribute be hidden, let
$b_i$'s $((i\bmod k)+1)$-th attribute be $1$, and let $b_i$'s all other
attributes be $0$.
\begin{eqnarray}
b_1&=&(?,1,0,0,\ldots,0,0)
\nonumber\\
b_2&=&(0,?,1,0,\ldots,0,0)
\nonumber\\
b_3&=&(0,0,?,1,\ldots,0,0)
\nonumber\\
&\ldots&
\nonumber\\
b_{k-1}&=&(0,0,0,0,\ldots,?,1)
\nonumber\\
b_k&=&(1,0,0,0,\ldots,0,?)
\nonumber
\end{eqnarray}
The $b_i$ are disjoint.
$\{b_1,b_2,\ldots,b_k\}$ is not tree-structured because
starting from $(?,?,\ldots,?)$, if we ever reveal an attribute (e.g., attribute $x$), then
$b_x$ cannot be in the final scheme.
\end{proof}

As we mentioned earlier, tree-structured attribute hiding schemes are results
of recursive splitting starting from the bundle of all instantiations.  At
every step, we either terminate or split a non-unit bundle in some way.
%The above process is defined for schemes. We can define a similar process for
%bundles.  E.g., starting from any non-unit natural bundle, we can recursively
%split it.  At every step, we either terminate and keep the current set of
%bundles, or pick a bundle from the current set, and split (reveal) one of its
%attributes.  
For every natural bundle $b$, let $t(b)$ be the optimal revenue for selling
instantiations in $b$, as a result of making optimal recursive splitting
decisions on $b$.  $t((?,?,\ldots,?))$ is then the optimal revenue of
tree-structured attribute hiding schemes.  Given a bundle, we either sell it as
a whole, or split it in some way as a first step.  Let $h(b)$ be the set of
hidden attributes of $b$.  We have

\[t(b)=\max\{2(b),\max_{x\in h(b)}\sum_{0\le i\le C_x-1}t(b|_x^i)\}\]

If $b$ has size $1$, then $h(b)=\emptyset$. That is, for unit bundles, $t(b)=2(b)$.  Given the
values of $t(b)$ for all $b$ with $|h(b)|=y$, we can then easily compute the values
of $t(b)$ for all $b$ with $|h(b)|=y+1$.  The total number of natural bundles
$|\mathcal{B}|$ is polynomial in $m$.  For every $b$, $t(b)$ is the maximum of
at most $k+1$ values, which is at most $\log_2m+1$. Therefore, the optimal
revenue $t((?,?,\ldots,?))$ can be
computed in polynomial time. The corresponding optimal scheme can be obtained
along the way. 

\section{Upper Bound and Weighted Matching}

Our objective is to find a set of disjoint natural bundles, denoted by $O$, which maximizes
$\sum_{b\in O}r(b)$.  We can model it as an integer program.  We introduce
$|\mathcal{B}|$ binary variables. For $b\in \mathcal{B}$, let $z_b$ be a binary
variable. If $z_b=1$, then it means $b\in O$.
%Otherwise, $b$ does not belong to it.  
The number of binary variables
$|\mathcal{B}|$ is polynomial in $m$.  The objective is to maximize $\sum_{b\in
\mathcal{B}}z_br(b)$. The constraints are that bundles in $O$ are 
disjoint.  That is, for $b_1,b_2\in \mathcal{B}$, if $b_1$ and $b_2$ intersect,
$z_{b_1}+z_{b_2}\le 1$.  The number of constraints is at most
$|\mathcal{B}|^2$, which is polynomial in $m$.  In summary, the optimal revenue
can be solved for based on an integer program with polynomial numbers of
variables and constraints.  One upper bound can then be solved for in
polynomial time if we consider the linear relaxation (replacing binary
variables by non-integer variables).

Some preprocessing can vastly reduce the number of variables in the above
program.  We first observe that, by definition, $r(b)=0$ for all $b$ with size
$1$. That is, we can safely set $z_b=0$ for all $b$ with size $1$.  We then
observe that, for any natural bundle $b$ with size greater than $1$, if the
following expression is true, then it means that instead of selling $b$ as a
single bundle, we can achieve higher revenue by recursively splitting it, in
which case we can safely set $z_b=0$.

\[2(b)<\max_{x\in h(b)}\sum_{0\le i\le C_x-1}t(b|_x^i)\]

In Section~\ref{sec:experiment}, our simulation shows that when computing the
upper bound, the above observations indeed vastly reduce the number of
variables in the linear program.  For example, for settings with $10$ binary
attributes and $10$ bidders, originally,
there are as many as $(2+1)^{10}=59049$ variables. After preprocessing, there are only
$220.28$ variables on average over repeated simulations.

We then discuss another heuristic for generating attribute hiding schemes with
high revenue.  This heuristic only applies to settings where all attributes are
binary.  If all attributes are binary, then a natural bundle with only one
attribute hidden contains exactly two instantiations.  The heuristic is based
on {\em maximum weighted matching}.  We view all instantiations as vertices. If
two instantiations can be merged into a natural bundle $b$, and $r(b)>0$, then
we create an edge with weight $r(b)$ between them.  Maximum weighted matching
can be solved in polynomial time.  The matching result characterizes the
optimal attribute hiding scheme under the additional constraint that at most
one attribute is hidden.\footnote{In Section~\ref{sec:experiment}, our
simulation shows that there are generally very few natural bundles with at
least two hidden attributes cannot be recursively split to achieve higher
revenue.  This somewhat justifies the heuristic requirement that at most one attribute is
hidden.}

\section{Experiments}
\label{sec:experiment}

In this section, we evaluate the performances of the proposed heuristic-based
attribute hiding schemes. For different values of $k$, $\bar{C}$, and $n$, we
construct problem instances with $k$ attributes, each attribute taking
$\bar{C}$ possible values, and $n$ bidders. The total number of possible
instantiations is then $\bar{C}^k$.  For each instantiation, bidders'
valuations are drawn independently from $U(0,1)$.  For every setup, we repeat
$100$ times and report the averages.

\begin{footnotesize}
\begin{center}
  \begin{tabular}{| c | c | c | c | c | c | c |}
    \hline
Setup & Tree & Match & UB & \#Opt & \#Var & HM\\
    \hline
$k=n=3$ &&&&&&\\
$\bar{C}=2$ & 13.33 & 11.58 & 15.42 & 47 & 5.82 & 1.08\\
%$n=3$ &&&&&&\\
    \hline
%$k=3$ &&&&&&\\
%$\bar{C}=2$ & 0.322 & 0.322 & 0.322 & 100 & 0.48 & 0\\
%$n=10$ &&&&&&\\
%    \hline
$k=n=5$ &&&&&&\\
$\bar{C}=2$ & 3.953 & 3.810 & 4.354 & 35 & 15.8 & 1.54\\
%$n=5$ &&&&&&\\
    \hline
$k=n=10$ &&&&&&\\
$\bar{C}=2$ & 0.836 & 0.927 & 0.950 & 0 & 220.28 & 4.76\\
%$n=10$ &&&&&&\\
    \hline
    \hline
$k=n=3$ &&&&&&\\
$\bar{C}=3$ & 9.251 & NA & 10.58 & 25 & 13.28 & 0.96\\
%$n=3$ &&&&&&\\
    \hline
$k=n=5$ &&&&&&\\
$\bar{C}=3$ & 1.767 & NA & 1.976 & 0 & 45.39 & 0.3\\
%$n=5$ &&&&&&\\
    \hline
$k=n=8$ &&&&&&\\
$\bar{C}=3$ & 0.296 & NA & 0.361 & 0 & 326.18 & 0.01\\
    \hline
  \end{tabular}
\end{center}
\end{footnotesize}

The table fields are described below:
\begin{itemize}
\item Tree, Match, UB: Comparing to selling all instantiations separately, the
extra revenue in terms of percentage. Tree is short for optimal tree-structured
scheme. Match is short for optimal scheme
based on maximum weighted matching (only applies to $\bar{C}=2$).
UB is short for upper bound on the optimal revenue.
\item \#Opt: Among $100$ repeated simulations, how many times one of 
the heuristic-based schemes reaches the upper bound (therefore guarantees 
optimality\footnote{Even if the heuristic-based schemes do not reach the upper bound, they may still possibly be optimal.}).
\item \#Var: How many variables are in the linear program for computing upper bound.
\item HM: How many natural bundles with at least two hidden attributes cannot
be recursively split to achieve higher revenue.
\end{itemize}

\section{Future Research}

Given the fact that it is NP-hard to solve for the optimal attribute hiding
scheme, one direction of future research is to study whether there are
heuristic-based attribute hiding schemes that guarantee a constant fraction of
the optimal revenue. A similar direction is to see how much revenue we lose by
not allowing unnatural bundles. A preliminary result shows that the optimal revenue
by clustering (allowing unnatural bundles) can be as high as twice the optimal
revenue by hiding attributes. The construction is as follows.  There are $m$
instantiations and $m$ bidders. Bidder $i$ only values instantiation $i$
positively. Let instantiation $1$ be $(0,0,\ldots,0)$ and bidder $1$'s
valuation for it be $\frac{m}{2}$. Let instantiation $m$ be $(1,1,\ldots,1)$
and bidder $m$'s valuation for it be $\frac{m}{2}$. For $1<i<m$, let
bidder $i$'s valuation for instantiation $i$ be $1$. With this setup, the
optimal revenue by clustering is $\frac{2m-2}{2}$. The optimal revenue by
hiding attributes is $\frac{m}{2}$. The ratio $\frac{2m-2}{m}$ approaches $2$
for large $m$. 

%\section{Conclusion}
%We study probabilistic single-item second-price auctions where
%the item is characterized by a set of attributes. The auctioneer knows the actual
%instantiation of all the attributes, but he may choose to reveal only a subset
%of these attributes to the bidders.  Our model is an abstraction of the
%following Ad auction scenario. The website (auctioneer) knows the demographic
%information of its impressions, and this information is in terms of a list of
%attributes (e.g., age, gender, country of location). The website may hide
%certain attributes from its advertisers (bidders) in order to create thicker
%market, which may lead to higher revenue. We study how to hide attributes in an
%optimal way.  We show that it is NP-hard to solve for the optimal attribute
%hiding scheme.  We then derive a polynomial-time solvable upper bound on the
%optimal revenue.  Finally, we propose two heuristic-based attribute hiding
%schemes.  Experiments show that revenue achieved by these schemes is close to
%the upper bound.

\bibliographystyle{named}
\bibliography{../mg}

\end{document}